\newcommand{\brho}{\boldsymbol{\rho}}
\newcommand{\bA}{\boldsymbol{A}}
\newcommand{\lam}[1]{\lambda_}
\newcommand{\bea}{\begin{eqnarray}}
\newcommand{\eea}{\end{eqnarray}}
\newcommand{\bean}{\begin{eqnarray*}}
\newcommand{\eean}{\end{eqnarray*}}
\newtheorem{prob}{Problem}
\newtheorem*{thm-A}{Theorem A}
\newtheorem*{thm-B}{Theorem B}
\newtheorem*{thm-C}{Theorem C}
\newtheorem*{thm-D}{Theorem D}
\newtheorem*{thm-E}{Theorem E}
\newtheorem*{thm-F}{Theorem F}
\newtheorem*{thm-G}{Theorem G}
\newtheorem*{thm-G'}{Theorem G$^\prime$}
\newtheorem*{thm-H}{Theorem H}
\newtheorem{theorem}{Theorem}
\newtheorem{cor}[theorem]{Corollary}
\theoremstyle{definition}
\newtheorem{example}[theorem]{Example}
\newtheorem{remark}[theorem]{Remark}
\numberwithin{equation}{section}
\begin{document}

\begin{frontmatter}

\title{A criterion of simultaneously symmetrization and spectral finiteness for a finite set of real $2\times 2$ matrices\tnoteref{label1}}

\tnotetext[label1]{Project was supported partly by National Natural Science Foundation of China (No. 11071112) and PAPD of Jiangsu Higher Education Institutions.}%

\author{Xiongping Dai}
\ead{xpdai@nju.edu.cn}
\address{Department of Mathematics, Nanjing University, Nanjing 210093, People's Republic of China}

\begin{abstract}
For $K\ge1$, let there be given an arbitrary finite set $\bA$ consisting of real $2$-by-$2$ matrices
\bean
A_0=\left[\begin{matrix}a&b\\c&d\end{matrix}\right],A_1=\left[\begin{matrix}a_1&r_1b\\r_1c&d_1\end{matrix}\right],\dotsc,A_K=\left[\begin{matrix}a_K&r_Kb\\r_Kc&d_K\end{matrix}\right], \eean
and by $\rho(M)$ it stands for the spectral radius of a square matrix $M$. In this paper, we first show that if $bc>0$ then $\bA$ may be simultaneously symmetrized.
This then implies that if $bc\ge0$,
\bean
\max\{\rho(A_0),\rho(A_1), \dotsc,\rho(A_K)\}=\sup_{n\ge1}\max_{M\in\bA^n}\sqrt[n]{\rho(M)};
\eean
that is, $\bA$ has the spectral finiteness property and then the stability of the switched system defined by $\bA$ is decidable.
\end{abstract}

\begin{keyword}
Symmetrization of matrices\sep generalized spectral radius\sep spectral finiteness.

\medskip
\MSC[2010] Primary 15B52; Secondary 65F15\sep 93D20\sep 37N30\sep 37N35.
\end{keyword}

\end{frontmatter}

\section{Introduction}\label{sec1}%
In this paper, we study the simultaneously symmetrization and then the finite-step realizability of the generalized/joint spectral radius for a finite set of real $2\times 2$ matrices.

\subsection{Criterion of simultaneously symmetrization}\label{sec1.1}
For a real $d\times d$ matrix $A=[a_{ij}]_{1\le i,j\le d}$, it is said to be \textit{symmetric} if $a_{ij}=a_{ji}$ for all $1\le i,j\le d$. A symmetric matrix has many good property like diagonalization. So, symmetrization of matrices is very important for problems involving numerical computation of matrices. In this short paper, we first show a simultaneously symmetrization for a family of real $2\times 2$ matrices, which may be stated as follows:

\begin{theorem}\label{thm1}
Let there be arbitrarily given $K+1$ real $2\times 2$ matrices
\bean
A_0=\left[\begin{matrix}a&b\\c&d\end{matrix}\right],A_1=\left[\begin{matrix}a_1&r_1b\\r_1c&d_1\end{matrix}\right],\dotsc,A_K=\left[\begin{matrix}a_K&r_Kb\\r_Kc&d_K\end{matrix}\right]
\eean
where $K\ge1$. If $bc>0$, then one can find a nonsingular matrix $Q\in\mathds{R}^{2\times 2}$ such that $QA_kQ^{-1}, 0\le k\le K$, all are symmetric.
\end{theorem}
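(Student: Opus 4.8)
The plan is to search for the conjugator $Q$ among \emph{diagonal} matrices. The decisive structural feature of this family is that every $A_k$ has its off-diagonal part proportional to the single vector $(b,c)$, the matrix $A_0$ being the case $r_0=1$. A diagonal similarity rescales the two off-diagonal entries of a $2\times2$ matrix by reciprocal factors while leaving the diagonal alone, so one scalar parameter should suffice to balance \emph{all} of the matrices at once.

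Concretely, I would set $Q=\mathrm{diag}(\alpha,\beta)$ with $\alpha,\beta\neq0$ and compute, for each $k$,
\[
QA_kQ^{-1}=\begin{bmatrix} a_k & r_kb\,\alpha/\beta\\ r_kc\,\beta/\alpha & d_k\end{bmatrix}.
\]
Since the diagonal entries are untouched, $QA_kQ^{-1}$ is symmetric precisely when its two off-diagonal entries agree, i.e. when $r_kb\,\alpha/\beta=r_kc\,\beta/\alpha$, equivalently $r_k\bigl(b\alpha^2-c\beta^2\bigr)=0$. Writing $t=\alpha/\beta$, the entire family is symmetrized simultaneously as soon as $bt^2=c$ holds, because this one equation annihilates the bracket for every index $k$ at once; and whenever some $r_k=0$ the corresponding $A_k$ is already diagonal, hence trivially symmetric and left diagonal by $Q$.

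It then remains to solve $bt^2=c$ over the reals, and this is exactly where the hypothesis enters: $bc>0$ forces $b\neq0$ and $c/b>0$, so $t=\sqrt{c/b}$ is a well-defined nonzero real number. Taking $\beta=1$ and $\alpha=\sqrt{c/b}$ yields a real nonsingular $Q=\mathrm{diag}\!\left(\sqrt{c/b},\,1\right)$ that symmetrizes $A_0,\dots,A_K$ simultaneously. I do not anticipate a genuine obstacle; the only point requiring care is the recognition that a \emph{single} scalar $t$ handles all $K+1$ matrices, which is what collapses the a priori overdetermined system (one symmetry condition per matrix) to the lone constraint $bt^2=c$. It is worth noting that the sign condition is precisely what keeps the solution real: if $bc<0$ then $c/b<0$, $t$ is forced to be imaginary, and this diagonal ansatz breaks down.
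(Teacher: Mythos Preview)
Your proof is correct and is essentially the same as the paper's: the paper also chooses a diagonal $Q=\mathrm{diag}(q_1,q_2)$ with $q_1/q_2=\sqrt{c/b}$ and verifies directly that each $QA_kQ^{-1}$ has both off-diagonal entries equal to $r_k\sqrt{bc}$. Your write-up is slightly more explicit about why one scalar suffices for all $K+1$ matrices and why the hypothesis $bc>0$ is what makes the ansatz real, but the argument is the same.
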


This provides a criterion of simultaneously symmetrizing a finite set of real $2\times 2$ matrices.

\begin{remark}\label{rem2}
In fact, our condition ``$bc>0$" is already very close to ``necessary'', as shown by the following counterexample.
Let
\bean
A_0=\left[\begin{matrix} -3&3.5\\-4&4.5
\end{matrix}\right],\quad
A_1=\left[\begin{matrix}0.5&0\\0&1\end{matrix}\right],\textrm{ where }bc=-14<0.
\eean
Although $A_0$ may be diagonalized and $A_1$ is already diagonal, yet it will be proved in Section~\ref{sec2} that $\{A_0,A_1\}$ cannot be simultaneously symmetrized.
\end{remark}

As an application, we will see that Theorem~\ref{thm1} is important for the numerical computation of the generalized spectral radius of a family of real $2\times 2$ matrices.

\subsection{Spectral finiteness for a finite set of real $2\times 2$ matrices}\label{sec1.2}

Throughout this paper, $\rho(M)$ will stand for the usual spectral radius of a square matrix $M$. For an arbitrary family of real matrices
\bean
\bA=\{A_0,\dotsc,A_K\}\subset\mathds{R}^{d\times d}
\eean
where $2\le d<+\infty$, its \textit{generalized spectral radius}, first introduced by Daubechies and Lagarias in \cite{DL92-01}, is defined by
\bean
\pmb{\rho}=\sup_{n\ge1}\max_{M\in\bA^n}\sqrt[n]{\rho(M)}\quad\left(\,=\limsup_{n\to+\infty}\max_{M\in\bA^n}\sqrt[n]{\rho(M)}\right),
\eean
where
\bean
\bA^n=\{M_1\dotsm M_n\colon M_i\in\bA\textrm{ for }1\le i\le n\}\quad\forall n\ge1.
\eean
According to the Berger-Wang spectral formula~\cite{BW92}, this quantity is very important for many pure and applied mathematics branches like numerical computation of matrices, differential equations, coding theory, wavelets, stability analysis of random matrix, control theory, combinatorics, and so on. See, for example, \cite{DL92-01, Gur95}.

Therefore, the following finite-step realization question for the accurate computation of the spectral radius $\pmb{\rho}$ becomes very interesting and important.

\begin{prob}\label{prob1}
Does there exist a finite-length word which realize $\pmb{\rho}$ for $\bA$; i.e.,
\bean
\pmb{\rho}=\max_{n\ge1}\max_{M\in\bA^n}\sqrt[n]{\rho(M)}\,\textrm{?}
\eean
In other words, does there exist any $M\in\bA^n$ for some $n\ge1$ such that 
\bean
\pmb{\rho}=\sqrt[n]{\rho(M)}\,\textrm{?}
\eean
\end{prob}

If one can find some word, say $M\in\bA^n$, for some $n\ge1$, such that $\pmb{\rho}=\sqrt[n]{M}$, then $\bA$ is said to possess \textit{the spectral finiteness property}.

This problem is equivalent to the following stability question:

\begin{prob}\label{prob2}
If the periodic stability $($i.e. $\rho(M)<1$ for any finite-length words $M\in\bigcup_{n\ge1}\bA^n$$)$ is satisfied then, does it hold the absolute stability:
\bean
\max_{M\in\bA^n}\|M\|\to0\quad \textrm{as }n\to+\infty\textrm{?}
\eean
\end{prob}

This spectral finiteness property, or equivalently, ``periodic stability $\Rightarrow$ absolute stability'', of $\bA$ was conjectured, respectively, by Pyatnitski\v{i} (see e.g.~\cite{PR91,SWMW07}), Daubechies and Lagarias in~\cite{DL92-01}, Gurvits in~\cite{Gur95}, and by Lagarias and Wang in~\cite{LW95}. It has been disproved first by Bousch and Mairesse in \cite{BM}, and then by Blondel \textit{et al.} in \cite{BTV}, by Kozyakin in~\cite{Koz05, Koz07}, all offered the existence of counterexamples in the case where $d=2$; moreover, an explicit expression for such a counterexample has been found in the recent work of Hare \textit{et al.}~\cite{HMST}.

However, an affirmative solution to Problem~\ref{prob1} (or equivalently, to Problem~\ref{prob2}) is very important; this is because it implies an effective computation of $\pmb{\rho}$ and decidability of stability of $\bA$ by only finitely many steps of computations. There have been some sufficient (and necessary) conditions for the spectral finiteness property for some systems $\bA$, based on and involving Barabanov norms, polytope norms, ergodic theory or some limit properties of $\bA$, for example, in Gurvits~\cite{Gur95}, Lagarias and Wang~\cite{LW95}, Guglielmi, Wirth and Zennaro~\cite{GWZ05}, Kozyakin~\cite{Koz07}, Dai, Huang and Xiao~\cite{DHX-pro}, and Dai and Kozyakin~\cite{DK}. But these theoretic criteria seems to be difficult to be directly employed to judge whether or not an explicit family $\bA$ or even a pair $\{A,B\}\subset\mathds{R}^{2\times 2}$ have the spectral finiteness property.

From literature, as far we know, there are only few results on such an explicit family of matrices $\bA$.

\begin{thm-A}[{Theys~\cite{Theys}, also see \cite[Proposition~4]{JB08}}]
If $A_0,\dotsc,A_K\in\mathds{R}^{d\times d}$ all are symmetric matrices, then the spectral finiteness property holds for $\bA$. In fact, there holds
\bean
\brho=\max_{0\le k\le K}\rho(A_k).
\eean
\end{thm-A}

For any matrix $A$, by $A^T$ it denotes the transpose of $A$. An generalization of Theorem~A is the following

\begin{thm-B}[{Plischke and Wirth~\cite[Proposition~18]{PW:LAA08}}]
If the system $\bA=\{A_0,\dotsc,A_K\}\in\mathds{R}^{d\times d}$ is symmetric, i.e. $A_k^T\in\bA$ for all $0\le k\le K$, then the spectral finiteness property holds for $\bA$.
\end{thm-B}

For a pair of matrices, there are the following results.

\begin{thm-C}[Jungers and Blondel~\cite{JB08}]
If $A_0, A_1$ are $2\times 2$ binary matrices, i.e. $A_0,A_1\in\{0,1\}^{2\times 2}$, then the spectral finiteness property holds for $\{A_0,A_1\}$.
\end{thm-C}

A more general result than the statement of Theorem~C is the following

\begin{thm-D}[Cicone \textit{et al.}~\cite{CGSZ10}]
If $A_0, A_1$ are $2\times 2$ sign-matrices, that is, $A_0,A_1$ belong to $\{0,\pm1\}^{2\times 2}$, then the spectral finiteness property holds for $\{A_0,A_1\}$.
\end{thm-D}

The followings are other different type of results.

\begin{thm-E}[Dai \textit{et al.}~\cite{DHLX11}]
If one of $A, B\in\mathds{R}^{d\times d}$ is of rank one, then there holds the spectral finiteness property for $\{A,B\}$.
\end{thm-E}

\begin{thm-F}[Dai, Huang and Xiao~\cite{DHX-siam}]
If, for $A, B\in\mathds{R}^{d\times d}$, there is a symmetric positive-definite matrix $P$ such that
\bean
P-A^TPA\ge0\quad\textrm{and}\quad P-B^TPB\ge0,
\eean
then the spectral finiteness property holds for $\{A,B\}$ in the case $2\le d\le3$.
\end{thm-F}

Using our symmetrization Theorem~\ref{thm1}, we can prove the following finiteness result:

\begin{theorem}\label{thm3}
Let there be arbitrarily given $K+1$ real $2\times 2$ matrices
\bean
A_0=\left[\begin{matrix}a&b\\c&d\end{matrix}\right],A_1=\left[\begin{matrix}a_1&r_1b\\r_1c&d_1\end{matrix}\right],\dotsc,A_K=\left[\begin{matrix}a_K&r_Kb\\r_Kc&d_K\end{matrix}\right], \eean
where $K\ge1$.
If $bc\ge0$ then $\bA=\{A_0,\dotsc,A_K\}$ has the spectral finiteness property and moreover
\bean
\brho=\max_{0\le k\le K}\rho(A_k).
\eean
\end{theorem}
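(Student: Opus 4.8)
The plan is to reduce everything to the already-symmetric situation covered by Theorem~A, treating the strict case $bc>0$ and the degenerate case $bc=0$ separately. The bridge is the elementary observation that both the generalized spectral radius $\brho$ and the spectral finiteness property are invariant under a simultaneous similarity: if $Q\in\mathds{R}^{2\times2}$ is nonsingular and $\widetilde{\bA}=\{QA_0Q^{-1},\dotsc,QA_KQ^{-1}\}$, then $Q(M_1\dotsm M_n)Q^{-1}=(QM_1Q^{-1})\dotsm(QM_nQ^{-1})$ and $\rho(QMQ^{-1})=\rho(M)$, so conjugation by $Q$ is a spectrum-preserving bijection of $\bA^n$ onto $\widetilde{\bA}^n$ for every $n$. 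Hence $\brho(\bA)=\brho(\widetilde{\bA})$, and a length-$n$ word realizes $\brho$ for $\bA$ if and only if its $Q$-conjugate realizes $\brho$ for $\widetilde{\bA}$.

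First I would dispose of the case $bc>0$. By Theorem~\ref{thm1} there is a nonsingular $Q$ making every $QA_kQ^{-1}$ symmetric. By the invariance just noted, $\brho(\bA)=\brho(\widetilde{\bA})$, and since $\widetilde{\bA}$ consists of symmetric matrices, Theorem~A gives $\brho(\widetilde{\bA})=\max_{0\le k\le K}\rho(QA_kQ^{-1})=\max_{0\le k\le K}\rho(A_k)$, the last equality because similarity preserves the spectrum. The single generator $QA_{k^*}Q^{-1}$ achieving the maximum realizes $\brho(\widetilde{\bA})$ at $n=1$, and pulling back through $Q^{-1}$ shows $A_{k^*}$ realizes $\brho(\bA)$; this gives the finiteness property together with the asserted formula.

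It remains to handle the boundary case $bc=0$, where Theorem~\ref{thm1} no longer applies. Here at least one of $b,c$ vanishes, and since the off-diagonal entries of each $A_k$ are the common multiples $r_kb$ and $r_kc$, this vanishing propagates to the whole family: if $b=0$ every $A_k$ is lower triangular, while if $c=0$ every $A_k$ is upper triangular. In either case the family is simultaneously triangular, so any product $M=M_1\dotsm M_n$ with $M_i\in\bA$ is again triangular, with diagonal entries equal to the products of the corresponding diagonal entries of its factors. Writing $a_0=a$ and $d_0=d$, the eigenvalues of such an $M$ are $\prod_{j}a_{i_j}$ and $\prod_{j}d_{i_j}$, so $\rho(M)=\max\{|\prod_j a_{i_j}|,|\prod_j d_{i_j}|\}$. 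Since $(\prod_{j=1}^{n}|a_{i_j}|)^{1/n}\le\max_{0\le k\le K}|a_k|$ and likewise for the $d$'s, one obtains $\sqrt[n]{\rho(M)}\le\max_{0\le k\le K}\max\{|a_k|,|d_k|\}=\max_{0\le k\le K}\rho(A_k)$ for every word, with equality already attained by the single optimal generator. Hence $\brho=\max_{0\le k\le K}\rho(A_k)$ is realized at $n=1$, and the finiteness property holds in the degenerate case as well.

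I expect the only real subtlety to lie in the case split itself rather than in any single computation: the strict inequality $bc>0$ is exactly what Theorem~\ref{thm1} requires, so the genuine point is recognizing that $bc=0$ forces simultaneous triangularity of the \emph{entire} family (not merely of $A_0$), after which the triangular computation is routine. Everything else reduces to the similarity-invariance of $\brho$ together with Theorems~\ref{thm1} and~A.
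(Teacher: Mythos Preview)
Your argument is correct and follows essentially the same route as the paper: invoke Theorem~\ref{thm1} together with Theorem~A when $bc>0$, and treat $bc=0$ separately. The paper dismisses the $bc=0$ case in one word (``trivially''), whereas you spell out the simultaneous-triangularity reason; this is a harmless elaboration rather than a different approach.
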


\begin{proof}
If $bc=0$ then the statement holds trivially. Now let $bc>0$. From Theorem~\ref{thm1}, one can find some nonsingular matrix $Q$ such that $QA_kQ^{-1}$, $0\le k\le K$, all are symmetric. Then, the statement of Theorem~\ref{thm3} follows immediately from Theorem~A, also from Theorem~B.
\end{proof}

As a result of Theorem~\ref{thm3}, we can obtain the following

\begin{cor}\label{cor4}
Let $A,B\in\mathds{R}^{2\times 2}$ be a pair of matrices such that
\begin{equation*}
A=\left[\begin{matrix}\lambda_1&0\\0&\lambda_2\end{matrix}\right],\quad B=\left[\begin{matrix}a&b\\c&d\end{matrix}\right].
\end{equation*}
If $bc\ge0$ then there holds the spectral finiteness property for $\{A,B\}$. More precisely, if $bc\ge0$ then $\brho=\max\{\rho(A), \rho(B)\}$.
\end{cor}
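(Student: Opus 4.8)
The plan is to derive Corollary~\ref{cor4} as a direct specialization of Theorem~\ref{thm3}, obtained by an appropriate relabeling of the two given matrices. First I would take the full matrix $B=\left[\begin{smallmatrix}a&b\\c&d\end{smallmatrix}\right]$ to play the role of the reference matrix in Theorem~\ref{thm3}, that is, set $A_0:=B$. With this choice the off-diagonal entries denoted $b$ and $c$ in Theorem~\ref{thm3} are literally the entries $b$ and $c$ of $B$, so the hypothesis $bc\ge0$ of the corollary is word for word the hypothesis of Theorem~\ref{thm3}.

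The key observation is that the diagonal matrix $A$ admits the prescribed shape $\left[\begin{smallmatrix}a_1&r_1b\\r_1c&d_1\end{smallmatrix}\right]$ once we take the scaling factor to be $r_1=0$. Indeed, with $r_1=0$ the two off-diagonal entries $r_1b$ and $r_1c$ vanish, and then the assignment $a_1:=\lambda_1$, $d_1:=\lambda_2$ gives exactly $A_1:=\left[\begin{smallmatrix}\lambda_1&0\\0&\lambda_2\end{smallmatrix}\right]=A$. I would emphasize that this identification is available no matter what the values of $b$ and $c$ are, precisely because $A$ is diagonal; in particular no case distinction on whether $b$ or $c$ vanishes is required.

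With these identifications the pair $\{A_0,A_1\}=\{B,A\}$ is exactly a family of the type treated in Theorem~\ref{thm3}, with $K=1$. Since $bc\ge0$ holds by hypothesis, Theorem~\ref{thm3} then applies and yields at once both the spectral finiteness property for $\{A,B\}$ and the formula
\bean
\brho=\max\{\rho(A_0),\rho(A_1)\}=\max\{\rho(A),\rho(B)\},
\eean
which is the asserted conclusion. Because the whole argument amounts to checking that the hypotheses of Theorem~\ref{thm3} are satisfied, there is no genuine analytic difficulty here; the only point deserving a moment's care is the verification that the choice $r_1=0$ does legitimately cast the diagonal matrix $A$ into the required form, together with the trivial but essential observation that $B$, rather than $A$, must be designated as the reference matrix $A_0$ (choosing $A_0:=A$ would instead force $b=c=0$ and fail to accommodate a general $B$).
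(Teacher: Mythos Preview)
Your proposal is correct and follows exactly the route the paper intends: Corollary~\ref{cor4} is stated as an immediate consequence of Theorem~\ref{thm3}, and your relabeling $A_0:=B$, $A_1:=A$ with $r_1=0$ is precisely the way to make that specialization explicit. The paper gives no further detail, so your write-up simply fills in the one-line verification it leaves to the reader.
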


Without the constraint condition $bc\ge0$ in Corollary~\ref{cor4}, a special case might be simply observed.

\begin{theorem}
Let $A,B\in\mathds{R}^{2\times 2}$ be a pair of matrices such that $A=\mathrm{diag}(\lambda_1,\lambda_2)$ and $B=\left[\begin{matrix}0&b\\c&0\end{matrix}\right]$. Then $\{A,B\}$ has the spectral finiteness property with $\brho=\max\{\rho(A), \rho(B)\}$.
\end{theorem}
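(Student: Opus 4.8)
The plan is to exploit the special generalized-permutation structure of $A$ and $B$ and reduce to the nonnegative case already handled by Corollary~\ref{cor4}. First I would note that both $A=\mathrm{diag}(\lambda_1,\lambda_2)$ and $B=\left[\begin{smallmatrix}0&b\\c&0\end{smallmatrix}\right]$ are \emph{monomial} matrices: each row and each column carries at most one nonzero entry, with $A$ realizing the identity permutation and $B$ the transposition. Hence every finite product $M=M_1\cdots M_n$ with $M_i\in\{A,B\}$ is again monomial---diagonal if the letter $B$ occurs an even number of times in the defining word, and anti-diagonal if it occurs an odd number of times.

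The key observation, and the step I would check most carefully, is that for monomial matrices the spectral radius does not see the signs of the entries. Writing $|A|=\mathrm{diag}(|\lambda_1|,|\lambda_2|)$ and $|B|=\left[\begin{smallmatrix}0&|b|\\|c|&0\end{smallmatrix}\right]$, I would verify two things. First, the entrywise identity $|M_1\cdots M_n|=|M_1|\cdots|M_n|$ holds, because in a product of monomial matrices every entry of the product is a single product of entries---the defining sum has exactly one nonzero term, so no cancellation can occur. Second, $\rho(M)=\rho(|M|)$ for each resulting monomial $M$: if $M=\mathrm{diag}(p,q)$ then $\rho(M)=\max\{|p|,|q|\}=\rho(|M|)$, while if $M=\left[\begin{smallmatrix}0&s\\t&0\end{smallmatrix}\right]$ its eigenvalues satisfy $\mu^2=st$, whence $\rho(M)=\sqrt{|st|}=\rho(|M|)$ irrespective of the signs of $s,t$ and of the sign of $st$. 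Combining these and using the obvious bijection between words, I obtain
\bean
\brho(\{A,B\})=\sup_{n\ge1}\max_{M\in\{A,B\}^n}\sqrt[n]{\rho(M)}=\brho(\{|A|,|B|\}).
\eean

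It remains to evaluate the right-hand side. The pair $\{|A|,|B|\}$ consists of nonnegative matrices with $|A|$ diagonal and the off-diagonal matrix $|B|$ satisfying $|b|\cdot|c|\ge0$, so Corollary~\ref{cor4} applies and gives $\brho(\{|A|,|B|\})=\max\{\rho(|A|),\rho(|B|)\}$. Since $\rho(|A|)=\max\{|\lambda_1|,|\lambda_2|\}=\rho(A)$ and $\rho(|B|)=\sqrt{|b||c|}=\rho(B)$, the displayed chain yields $\brho=\max\{\rho(A),\rho(B)\}$, a value attained already by the length-one words $A$ or $B$; this is precisely the asserted spectral finiteness. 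Beyond the sign-insensitivity above, no real difficulty arises. If one prefers to avoid Corollary~\ref{cor4} altogether, the nonnegative bound can be read off directly: the nonzero diagonal (resp.\ off-diagonal) entry of $M$ is the product of entries along the closed (resp.\ open) walk on two vertices prescribed by the word, in which each $A$-factor contributes at most $\rho(A)$ and the $B$-factors pair up into powers of $|b||c|=\rho(B)^2$, so that $\rho(M)\le(\max\{\rho(A),\rho(B)\})^{n}$ immediately.
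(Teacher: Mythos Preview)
Your argument is correct. The monomial observation is exactly right: products of $A$ and $B$ stay monomial, each entry of such a product is a single monomial in the entries of the factors (no cancellation), hence $|M_1\cdots M_n|=|M_1|\cdots|M_n|$, and for a $2\times 2$ monomial matrix one checks directly that $\rho(M)=\rho(|M|)$. This yields $\brho(\{A,B\})=\brho(\{|A|,|B|\})$, and Corollary~\ref{cor4} finishes the job since $|b|\cdot|c|\ge 0$.

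This is, however, a genuinely different route from the paper. The paper does not reduce to Corollary~\ref{cor4}; instead it argues by direct norm estimation: after normalizing so that $\rho(A)<1$ and $\rho(B)<1$, it uses the explicit closed forms $A^m=\mathrm{diag}(\lambda_1^m,\lambda_2^m)$ and $B^{2n'}=(bc)^{n'}I_2$, $B^{2n'+1}=(bc)^{n'}B$ to see that any alternating product $A^{m_1}B^{n_1}\cdots A^{m_k}B^{n_k}$ has $\|\cdot\|_2$ tending to $0$ as $k\to\infty$, whence $\brho<1$. Your approach is cleaner in that it isolates the structural reason (sign-insensitivity of spectral radius for monomial matrices) and leverages an already-proved result; it also makes the alternative, self-contained bound at the end of your write-up transparent. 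The paper's approach, by contrast, is more elementary in the sense that it requires nothing beyond the explicit powers of $A$ and $B$ and avoids invoking Corollary~\ref{cor4} or Theorem~A at all. Both arguments ultimately exploit the same arithmetic, but yours packages it as a reduction while the paper unpacks it as a direct stability calculation.
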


\begin{proof}
Let $\rho(A)=\max\{|\lambda_1|,|\lambda_2|\}<1$ and $\rho(B)=\sqrt{|bc|}<1$. Let $\{(m_k,n_k)\}_{k=1}^{+\infty}$ be an arbitrary sequence of positive integer pairs. We claim that
\bean
\|A^{m_1}B^{n_1}A^{m_2}B^{n_2}\dotsm A^{m_k}B^{n_k}\|_2\to0\quad \textrm{as }k\to+\infty,
\eean
where $\|\cdot\|_2$ denotes the matrix norm induced by the standard Euclidean vector norm on $\mathds{R}^2$. In fact, the claim follows from
\bean
A^m=\left[\begin{matrix}\lambda_1^m&0\\0&\lambda_2^m\end{matrix}\right]\quad \textrm{and}\quad B^n=\begin{cases}(bc)^{n^\prime}I_2& \textrm{if }n=2n^\prime,\\(bc)^{n^\prime}B& \textrm{if }n=2n^\prime+1.\end{cases}
\eean
Then, this claim implies that $\brho=\max\{\rho(A), \rho(B)\}$.
\end{proof}

\subsection{Outline}
This paper is simply organized as follows. We will prove Theorem~\ref{thm1} and Remark~\ref{rem2} in Section~\ref{sec2}. Finally, we will end this paper with some examples in Section~\ref{sec3}.

\section{Simultaneously symmetrization}\label{sec2}

This section is mainly devoted to proving our criterion of simultaneously symmetrizing, i.e., Theorem~\ref{thm1}.

\begin{proof}[Proof of Theorem~\ref{thm1}]
Let there be arbitrarily given $K+1$ real $2\times 2$ matrices
\bean
A_0=\left[\begin{matrix}a&b\\c&d\end{matrix}\right],A_1=\left[\begin{matrix}a_1&r_1b\\r_1c&d_1\end{matrix}\right],\dotsc,A_K=\left[\begin{matrix}a_K&r_Kb\\r_Kc&d_K\end{matrix}\right], \eean
where $K\ge1$, such that $bc>0$. Let
\bean
Q=\left[\begin{matrix}q_1&0\\0&q_2\end{matrix}\right]\quad \textrm{such that }\; q_1q_2\not=0\textrm{ and }\frac{q_1}{q_2}=\sqrt{\frac{c}{b}}.
\eean
Then,
\bean
QA_0Q^{-1}&=&\left[\begin{matrix}a&\sqrt{bc}\\\sqrt{bc}&d\end{matrix}\right],\\ QA_1Q^{-1}&=&\left[\begin{matrix}a_1&r_1\sqrt{bc}\\r_1\sqrt{bc}&d_1\end{matrix}\right],\\
\vdots&\dotsm&\vdots,\\
QA_KQ^{-1}&=&\left[\begin{matrix}a_K&r_K\sqrt{bc}\\r_K\sqrt{bc}&d_K\end{matrix}\right],
\eean
they are symmetric. This proves Theorem~\ref{thm1}.
\end{proof}

We now turn to the proof of Remark~\ref{rem2}.

Let
\bean
A_0=\left[\begin{matrix} -3&3.5\\-4&4.5
\end{matrix}\right],\;
A_1=\left[\begin{matrix}0.5&0\\0&1\end{matrix}\right],\;\textrm{ where }bc=-14<0,
\eean
as in Remark~\ref{rem2}. Put
\bean
Q=\left[\begin{matrix}-0.5& 1 \\  0 & 1\end{matrix}\right],
\eean
then we have
\bean
Q^{-1}=\left[\begin{matrix}-2& 2\\ 0 & 1\end{matrix}\right].
\eean
So,
\bean
B_0:=Q^{-1}A_0Q=\left[\begin{matrix}1 & 0 \\  2 & 0.5
\end{matrix}\right]
\eean
and
\bean
B_{1}:=Q^{-1}A_1Q=\left[\begin{matrix}0.5 & 1 \\  0 & 1\end{matrix}\right].
\eean

According to Kozyakin~\cite[Theorem~10, Lemma~12 and Theorem~6]{Koz07}, there follows that: \textit{There always exists a pair of real numbers $\alpha>0,\beta>0$ such that $\{\alpha B_0,\beta B_1\}$ does not have the spectral finiteness property}.

Thus, if $\{A_0,A_1\}$ might be simultaneously symmetrized, then $\{\alpha A_0,\beta A_1\}$ and hence $\{\alpha B_0,\beta B_1\}$ have the spectral finiteness property from Theorem~\ref{thm3}, for all $\alpha>0,\beta>0$. This is a contradiction. Therefore, $\{A_0,A_1\}$ cannot be simultaneously symmetrized.

This proves the statement of Remark~\ref{rem2}.
Meanwhile this argument shows that the constraint condition ``$bc\ge0$" in Theorem~\ref{thm3} and even in Corollary~\ref{cor4} is crucial for the spectral finiteness property in our situation.

\medskip
Given an arbitrary set $\bA=\{A_0,\dotsc,A_K\}\subset\mathds{R}^{d\times d}$, although its periodic stability implies that it is stable almost surely in terms of arbitrary Markovian measures as shown in Dai, Huang and Xiao~\cite{DHX11-aut} for the discrete-time case and in Dai~\cite{Dai-JDE} for the continuous-time case, yet its absolute stability is
generally undecidable; see, e.g., Blondel and Tsitsiklis~\cite{BT97, BT00, BT00-aut}.

However, Theorem~\ref{thm3} proved in Section~\ref{sec1.2} is equivalent to the statement\,---\,``periodic stability $\Rightarrow$ absolute stability'', i.e., Problem~\ref{prob2}, under suitable additional conditions.

\begin{theorem}\label{thm6}
Let $\bA=\{A_0,\dotsc,A_K\}\subset\mathds{R}^{2\times 2}$ be such that
\bean
A_0=\left[\begin{matrix}a&b\\c&d\end{matrix}\right],A_1=\left[\begin{matrix}a_1&r_1b\\r_1c&d_1\end{matrix}\right],\dotsc,A_K=\left[\begin{matrix}a_K&r_Kb\\r_Kc&d_K\end{matrix}\right], \eean
where $K\ge1$ and $bc\ge0$.
Then $\bA$ is absolutely stable if and only if $\rho(A_k)<1$ for all $0\le k\le K$.
\end{theorem}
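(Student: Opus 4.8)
The plan is to obtain Theorem~\ref{thm6} as an essentially immediate consequence of the finiteness result Theorem~\ref{thm3}, after recalling the standard dictionary between absolute stability and the generalized spectral radius. First I would recall that, for a finite (hence bounded) set of matrices, the Berger--Wang formula identifies $\brho$ with the joint spectral radius $\lim_{n\to+\infty}\max_{M\in\bA^n}\|M\|^{1/n}$, and that absolute stability in the sense of Problem~\ref{prob2}---namely $\max_{M\in\bA^n}\|M\|\to0$ as $n\to+\infty$---holds precisely when this common value satisfies $\brho<1$. Thus the whole assertion reduces to the equivalence $\brho<1 \iff \rho(A_k)<1$ for all $0\le k\le K$.

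For the direction ``$\Leftarrow$'' I would invoke Theorem~\ref{thm3}: since $bc\ge0$, that theorem supplies the explicit value $\brho=\max_{0\le k\le K}\rho(A_k)$. If each generator satisfies $\rho(A_k)<1$, then the maximum over the finitely many indices $0\le k\le K$ is again strictly below $1$, so $\brho<1$ and $\bA$ is absolutely stable. For the converse ``$\Rightarrow$'' I would use only the trivial lower bound $\rho(A_k)\le\brho$, which comes from the $n=1$ term in the definition of $\brho$ (as $A_k\in\bA=\bA^1$ and $\sqrt[1]{\rho(A_k)}=\rho(A_k)$). Hence if $\bA$ is absolutely stable, so that $\brho<1$, every individual spectral radius $\rho(A_k)$ is at most $\brho$ and therefore strictly less than $1$.

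The entire content of the theorem, and the only place where nontrivial work is hidden, is the hard inequality $\brho\le\max_k\rho(A_k)$ furnished by Theorem~\ref{thm3}; the reverse inequality $\brho\ge\max_k\rho(A_k)$ is automatic and independent of the sign condition on $bc$. This spectral finiteness input is exactly what lets us certify absolute stability by inspecting the generators $A_0,\dots,A_K$ individually, rather than the full collection of products in $\bigcup_{n\ge1}\bA^n$ required by the general periodic-stability criterion. I do not anticipate any genuine obstacle beyond correctly assembling these two ingredients, since once $\brho$ is known to coincide with the largest generator spectral radius the claimed equivalence is a one-line consequence; the counterexample discussed after Remark~\ref{rem2} shows that this reduction genuinely fails without $bc\ge0$, which is why the hypothesis must be retained.
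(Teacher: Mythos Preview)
Your proposal is correct and follows essentially the same route as the paper: invoke Theorem~\ref{thm3} to obtain $\brho=\max_{0\le k\le K}\rho(A_k)$, and then use the standard equivalence between absolute stability and $\brho<1$ (for which the paper cites \cite{Bar, Gur95, SWP}). Your write-up simply unpacks the two directions of the equivalence in slightly more detail than the paper's one-line proof.
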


\begin{proof}
This statement comes immediately from Theorem~\ref{thm3}. In fact, Theorem~\ref{thm3} implies $\brho<1$ $\mathit{iff}$ $\rho(A_k)<1$ for all $0\le k\le K$ and hence $\bA$ is absolutely stable $\mathit{iff}$ $\rho(A_k)<1$ for all $0\le k\le K$; see, e.g., \cite{Bar, Gur95, SWP}.
\end{proof}

This shows that the absolute stability of the switched system induced by $\bA$ is decidable in the situation of Theorem~\ref{thm6}.

\section{Examples of stability}\label{sec3}
In this section, we consider some explicit examples using Theorem~\ref{thm3} and Corollary~\ref{cor4}.

\subsection{Applications of Corollary~\ref{cor4}}%
For any two real $2\times 2$ matrices $A,B$, to utilize our Corollary~\ref{cor4}, the first step is to diagonalize one of $A,B$. So, we need the Diagonalization Theorem: \textit{An $n\times n$ matrix $A$ is diagonalizable if and only if $A$ has $n$ linearly independent eigenvectors}.

\begin{example}\label{example7}
Let $A=\left[\begin{matrix}2&1\\0&1\end{matrix}\right]$ and $B=\left[\begin{matrix}-\frac{5}{2}&\frac{2\sqrt{3}-11}{2}\\1&4\end{matrix}\right]$. We assert that $\{A,B\}$ has the spectral finiteness property.

In fact, since
\bean
A_1:=\left[\begin{matrix}1&1\\0&1\end{matrix}\right]\left[\begin{matrix}2&1\\0&1\end{matrix}\right]\left[\begin{matrix}1&-1\\0&1\end{matrix}\right]=\left[\begin{matrix}2&0\\0&1\end{matrix}\right]
\eean
and
\bean
A_0:=\left[\begin{matrix}1&1\\0&1\end{matrix}\right]\left[\begin{matrix}-\frac{5}{2}&\frac{2\sqrt{3}-11}{2}\\1&4\end{matrix}\right]\left[\begin{matrix}1&-1\\0&1\end{matrix}\right]=\left[\begin{matrix}-\frac{3}{2}&\sqrt{3}\\1&3\end{matrix}\right],
\eean
it follows, from Corollary~\ref{cor4}, that $\{A,B\}$ has the spectral finiteness property with
\bean
\brho=\rho(B)=\frac{1}{2}\left(3+\sqrt{27+4\sqrt{3}}\right).
\eean
\end{example}

\begin{example}\label{example8}
Let $A=\left[\begin{matrix}0.95&0.03\\0.05&0.97\end{matrix}\right]$ and $B=\left[\begin{matrix}0&b\\c&0\end{matrix}\right]$, where $b,c\in\mathds{R}$ such that \begin{equation*}
225b^2-34bc+c^2\le0.
\end{equation*}
We now consider the spectral finiteness property of $\{A,B\}$.

The eigenvalues of $A$ are $1$ and $0.92$, their corresponding eigenvectors are respectively $(3,5)^T$ and $(1,-1)^T$. We put
\bean
P=\left[\begin{matrix}3&1\\5&-1\end{matrix}\right].
\eean
Then
\bean
P^{-1}=\left[\begin{matrix}1/8&1/8\\5/8&-3/8\end{matrix}\right]\quad \textrm{and}\quad P^{-1}AP=\left[\begin{matrix}1&0\\0&0.92\end{matrix}\right].
\eean
Since
\bean
P^{-1}BP=\frac{1}{8}\left[\begin{matrix}*&c-9b\\-c+25b&\star\end{matrix}\right],
\eean
there follows $(c-9b)(-c+25b)\ge0$. So, $\{A,B\}$ has the spectral finiteness property from Corollary~\ref{cor4} such that
\begin{equation*}
\brho=\max\left\{1, \sqrt{|bc|}\right\}.
\end{equation*}
\end{example}

\begin{example}\label{example9}
Let $A_0=\left[\begin{matrix}a&b\\0&1\end{matrix}\right]$ and $A_1=\left[\begin{matrix}1&0\\c&d\end{matrix}\right]$, where the constants $a,b,c,d\in\mathds{R}$ with $a\not=1$.

If $ad=0$ then either $\mathrm{rank}(A_0)=1$ or $\mathrm{rank}(A_1)=1$ and so $\{A_0,A_1\}$ has the spectral finiteness property from Theorem~E.

If $bc=0$ then either $b=0$ or $c=0$. So $\{A_0,A_1\}$ has the spectral finiteness property from Corollary~\ref{cor4}.

Next, we let $bc\not=0$ and define
\bean
Q=\left[\begin{matrix}\frac{a-1}{b}&1\\0&1\end{matrix}\right].
\eean
Then,
\bean
Q^{-1}=\left[\begin{matrix}\frac{b}{a-1}&-\frac{b}{a-1}\\0&1\end{matrix}\right]
\eean
and
\bean
QA_0A^{-1}=\left[\begin{matrix}a&0\\0&1\end{matrix}\right],\quad QA_1A^{-1}=\left[\begin{matrix}1+\frac{bc}{a-1}&\frac{(d-1)(a-1)-bc}{a-1}\\\frac{bc}{a-1}&d-\frac{bc}{a-1}\end{matrix}\right].
\eean
Note that
\bean
\frac{(d-1)(a-1)-bc}{a-1}\times\frac{bc}{a-1}\ge0
\eean
if and only if
\bean
[(1-a)(1-d)-bc]\times bc\ge0.
\eean
Hence, if either $a\not=1$ or $d\not=1$ and $[(1-a)(1-d)-bc]\times bc\ge0$, then $\{A_0,A_1\}$ has the spectral finiteness property.

If $a=d=1$ and $bc\ge1$, then $\{A_0,A_1\}$ has the spectral finiteness property from Kozyakin~\cite[Theorem~10, Lemma~12 and Theorem~6]{Koz07}.
\end{example}

\subsection{Applications of Theorem~\ref{thm3}}%
Applying Theorem~\ref{thm3}, we consider the following

\begin{example}\label{example10}
Let
\bean
A_0=\left[\begin{matrix}\sqrt{3}&1\\2&1.3\end{matrix}\right], A_1=\left[\begin{matrix}\sqrt{2}&10\\20&\sqrt{7}\end{matrix}\right],A_2=\left[\begin{matrix}-1&0.1\\0.2&\sqrt{5}\end{matrix}\right].
\eean
Then from Theorem~\ref{thm3}, if follows that $\{A_0,A_1,A_2\}$ has the spectral finiteness property.
\end{example}
\bigskip
\subsection*{\textbf{Acknowledgments}}%
The author would like to thank professors Y.~Huang and M.~Xiao for some helpful discussion, and particularly, he is grateful to professor Victor Kozyakin for some useful comments.
\bibliographystyle{amsplain}

\end{document}